\newtheorem{proposition}{Proposition}
\begin{document}

\begin{frontmatter}
\title{An Input Reconstruction Approach for Command Following in Linear MIMO Systems\tnoteref{mytitlenote}}
\tnotetext[mytitlenote]{This work was supported by Department of Science and Technology through projects SR/S3/MERC/0064/2012 and SERC/ET-0150/2012, and in part by IIT Gandhinagar.}
 
  \author{R. A.~Chavan\fnref{fn1}}
  \ead{rch258@g.uky.edu}
  \fntext[fn1]{Was at the Indian Institute of Technology, Gandhinagar when the work started, currently a graduate student at the University of Kentucky, Lexington, KY, USA.}
  \address{University of Kentucky, Lexington, KY, USA}
  
  \author{S.~D.~Kadam \fnref{fn2}}
  \ead{kadam.sujay@iitgn.ac.in}
 \fntext[fn2]{Graduate student in Electrical Engineering at the Indian Institute of Technology, Gandhinagar.}
  \address{Indian Institute of Technology Gandhinagar, Palaj, Gandhinagar, Gujarat, India.}

 \author{A.~Rajiv \fnref{fn3}}
  \ead{abhijith.rajiv@gmail.com}
  \fntext[fn3]{Was at the Indian Institute of Technology, Gandhinagar when the work started, currently at Magic Leap, Seattle, WA, USA.}
  \address{Magic Leap, Seattle, WA, USA}
  
  \author{H J.~Palanthandalam-Madapusi \corref{cor1}\fnref{fn4}}
  \ead{harish@iitgn.ac.in}
  \fntext[fn4]{Associate Professor in Mechanical Engineering at the Indian Institute of Technology, Gandhinagar.}
  \address{Indian Institute of Technology Gandhinagar, Palaj, Gandhinagar, Gujarat, India.}
  \cortext[cor1]{Corresponding author}
 
\begin{abstract}
The idea of posing a command following or tracking control problem as an input reconstruction problem is explored in the paper. For a class of square MIMO systems with known dynamics, by pretending that reference commands are actual outputs of the system, input reconstruction methods can be used to determine control action that will result in a system following desired reference commands. A feedback controller which is a combination of an unbiased state estimator and an input reconstructor that ensures unbiased tracking of reference commands is proposed.  Simulations and real-time implementation are presented to demonstrate utility of the proposed idea. Conditions under which proposed controller may be used for non-square systems are also discussed.
\end{abstract}
\begin{keyword}
  Command following, input reconstruction, Kalman filter, state estimation, unbiased minimum variance filter.
  \end{keyword}
\end{frontmatter}

\section{Introduction}
The topic of input reconstruction has seen a number of developments recently \cite{xiong03,houautomatica1998,glover69,corless98,kitanidis,steven_kitanidis,palanthUMVACC2007,silverman:69,damato2013}. 
Input reconstruction methods determine the unknown inputs (deterministic) to a system given model information and output measurements originating from those set of unknown inputs. These are also referred to as left inversion problems.
A command following problem can be seen as an input reconstruction problem in a sense that the reference command can be viewed as the outputs of the system and the controller seeks to reconstruct inputs that would yield these desired outputs (reference commands). 

 In that sense, by implicitly assuming that  there exists a control input that yields the desired output $y_{\rm ref}$, input reconstruction can be used  to determine the control inputs that yield the desired outputs by treating the desired outputs are the actual outputs of the system. However, a brute-force left-inversion approach results in a feedforward (open-loop) control, and hence there is a need to integrate a feedback approach with a left inversion approach.
In this paper, we borrow input reconstruction methodologies from previous works and combine them to develop a command following controller based on left inversion that also naturally integrates feedback. An advantage of using such an approach is that, it is readily generalized to MIMO systems as the input reconstruction methods are inherently multivariable. The problems of input reconstruction-left invertibility and tracking control-right invertibility are duals of each other \cite{sain69,marroGA} and in case of left invertible systems it is possible to determine the unknown inputs from system outputs, whereas, in case of right invertible systems, it is possible to generate inputs to track given reference commands. 
Further, it is shown in the paper that the tracking of reference commands by outputs is unbiased for systems with same number of inputs and outputs. Illustrative examples are presented to demonstrate the utility of the suggested control scheme.

The paper is arranged as follows. Section \ref{sec2:ProblemStatement} puts forth the problem of following the desired reference commands in an expectation sense. Methodology to address the defined problem using input reconstruction and state estimation methods is presented in Section \ref{sec3:Methodology}. The control scheme resulting as a combination of an unbiased input reconstructor and a state estimator and remarks on tracking error are presented in Section \ref{cirsem}. 
Illustrative numerical examples to highlight the utility of the proposed scheme are presented in Section \ref{sec6:NumericalResults}. 
Section \ref{sec8:remarks} discusses assumptions on number of plant inputs-outputs and their link with the problem of command following. 
 Section \ref{sec10:conc} provides some concluding remarks.

\section{Problem Statement}\label{sec2:ProblemStatement}
\indent Consider the linear time invariant system with outputs $y_{k+1}$ and with the applied control inputs $\hat{u}_k$ given by
\begin{eqnarray}
x_{k+1} &=& Ax_k + B\hat{u}_k + w_k,\label{p1}\\
y_{k+1} &=& Cx_{k+1} + v_{k+1},\label{p2}
\end{eqnarray}
where $x_k \in \mathbb{R}^n,\, \hat{u}_k \in \mathbb{R}\textcolor{blue}{^p},\, y_k \in
\mathbb{R}^l, w_k \in
\mathbb{R}^n, \text{ and } v_k \in \mathbb{R}^l$. Initially, we assume $l=p$, that is, the system is square. This assumption will be relaxed later to discuss command following using input reconstruction in non-square cases. Let $y_{k+1}$ represent the actual plant output in response to the applied control input $\hat{u}_k$.  The process and sensor noise be denoted by $w_k$ and $v_k$ respectively. These noise sequences are assumed to be i.i.d. Gaussian sequences with zero mean. We assume ${\rm rank}(B)$ = $l$, since in the case of $B$ being rank deficient, one or more inputs are redundant. The system (\ref{p1}), (\ref{p2}) is  assumed to be  state controllable and input and state observable \cite{palanthUMVACC2007}. The assumption of input and state observability further implies ${\rm rank}(CB)=p$. Since $l=p$, ${\rm rank}(CB)$ = $l$ and  the system is trackable \cite{kadam2017revisiting}.

We consider a command following problem in which it is desired that a reference command $y_{\rm ref}$ be followed by the system output.
If $y_{\rm ref}$ is known beforehand, the problem can be seen as a preview control problem. We assume that this reference command can be followed exactly with a (not yet known) desired control input $u_{\rm ref}$ in the noise free case as given by the reference system
\begin{eqnarray}
x_{{\rm ref},k+1} &=& Ax_{{\rm ref},k} + Bu_{{\rm ref},k},\label{ip1}\\
y_{{\rm ref},k+1} &=& Cx_{{\rm ref},k+1},\label{ip2}
\end{eqnarray}
where $x_{\rm ref} \in \mathbb{R}^n$, $u_{\rm ref} \in \mathbb{R}^p$, $ y_{\rm ref} \in
\mathbb{R}^l$, $A \in \mathbb{R}^{n \times n}, B \in \mathbb{R}^{n \times p},$
and $ C \in \mathbb{R}^{l \times n}$. The reference system and the actual plant have the same system matrices $A$, $B$ and $C$ and therefore have the same number of inputs, states and outputs. The reader is reminded that the system considered here is square.

The error in following the reference command is
\begin{align}
y_{{\rm ref},k+1}-y_{k+1}=&~Cx_{{\rm ref},k+1} - Cx_{k+1} - v_k\notag\\
=&~CA(x_{{\rm ref},k} - x_k)+CB(u_{{\rm ref},k}-\hat{u}_k)\notag\\
& -Cw_k-v_k. \label{e2}
\end{align}
Taking the expected value on both sides of (\ref{e2}) yields
\begin{equation} 
\mathbb{E}[y_{{\rm ref},k+1}-y_{k+1}]=CA\mathbb{E}[x_{{\rm ref},k} - x_k] + CB\mathbb{E}[u_{{\rm ref},k}-\hat{u}_k].\label{ce3}
\end{equation}

Equation (\ref{ce3}) implies that the tracking error will be zero in an expectation sense if the terms on the right hand side of (\ref{ce3}) are made zero. The following section discusses the methodology to make the tracking error zero effectively by making the right hand side terms zero.
\section{Methodology} \label{sec3:Methodology}
The tracking error between the reference command and actual output, in an expectation sense is represented by the term on the left hand side of (\ref{ce3}). Looking at the right hand side of (\ref{ce3}), it is logical to approach the command following problem as a two-part exercise, first, to device a strategy to make $\mathbb{E}[u_{{\rm ref},k} - \hat{u}_k]$ zero, and second, to ensure that $\mathbb{E}[x_{{\rm ref},k} - x_k]$ goes to zero. It would be natural to consider an input reconstructor for the first part, and an unbiased state estimator for the second part. We therefore  propose a command following controller that combines a state estimator and an input reconstructor  as shown in Fig. \ref{com_fol_fig}. We analyse the convergence of tracking error and discuss the choices for the state estimator and input reconstructor in the following subsections. 
\subsection{State Estimation} \label{StEst}
Assuming that the applied control input $\hat{u}_{k-1}$ is available and the noise characteristics are known,  unbiased estimates of the actual plant state can be obtained using an optimal estimator. Further, since the system under consideration is linear, the Kalman filter is an obvious choice for the state estimator. With this choice, the estimate $\hat{x}$ of the actual plant state $x$ is
\begin{align}
\hat{x}_{k|k-1} &= A\hat{x}_{k-1|k-1} + B\hat{u}_{k-1},\label{k1}\\
\hat{x}_{k|k}&=  \hat{x}_{k|k-1} + K_{k}(y_{k} - C\hat{x}_{k|k-1`})
\label{k2}
\end{align}
where $y_{k}$ is the known measurement and $\hat{u}_{k-1}$ is the control input already  applied. The Kalman gain $K_{k}$ is computed as
\begin{eqnarray}
P_{{\rm kal},{k|k-1}} &=& AP_{{\rm kal},{k-1|k-1}}A^\intercal + Q,\label{k3}\\
P_{{\rm kal},{k|k}} &=&  (I - K_{k}C) P_{{\rm kal},{k|k-1}},\label{k6}\\
S_{k} &=& C  P_{{\rm kal},{k|k-1}} C ^\intercal + R,\label{k4}\\
K_{k} &=&  P_{{\rm kal},{k|k-1}} C ^\intercal S_{k}^{-1},\label{k5}
\end{eqnarray}
where $P_{\rm kal}$ is the state error covariance of the Kalman filter. 
It must be noted that this choice of the state estimator only gives an unbiased estimate of the actual plant state, that is $\mathbb{E}[\hat{x}_k]=\mathbb{E}[x_k]$ and does not immediately imply that $\mathbb{E}[x_{{\rm ref},k}-x_k] = 0$.  Conditions under which  $\mathbb{E}[\hat{x}_k]=\mathbb{E}[x_k]$ leads to $\mathbb{E}[x_{{\rm ref},k}-x_k] = 0$, and subsequently $\mathbb{E}[y_{{\rm ref},k}-y_k] = 0$, will be brought up later in Section \ref{cirsem}.  Once $\hat{x}_{k|k}$ is estimated using the Kalman filter and known past inputs $\hat{u}_{k-1}$ and current measurement $y_k$, the next step is to determine the control input $\hat{u}_k$ to be applied in current time step. This is discussed in the following subsection.  

\subsection{Input Reconstruction} \label{cirse}
Having chosen Kalman filter as the state estimator in Section \ref{StEst}, the next objective is to choose a suitable input reconstructor that will
provide an unbiased estimate of the desired input $u_{{\rm   ref},k}$. Out of the input reconstructors developed in the literature, we adopt a
filter based input reconstruction method developed in \cite{palanthUMVACC2007} due to its simplicity and inherent ability to handle MIMO systems. This Unbiased Minimum Variance (UMV) filter is closely related to a Kalman filter but has an additional input reconstruction equation and a modified gain to account for the unknown inputs. Input reconstruction for (\ref{p1}), (\ref{p2}) using the UMV is achieved in a three step process \cite{palanthUMVACC2007} given by
\begin{align}
\hat{x}_{k+1|k} &= A \hat{x}_{k|k}, \label{ir1} \\
\hat{x}_{k+1|k+1}
&=  \hat{x}_{k+1|k} + L_{k+1}(y_{k+1} - C\hat{x}_{k+1|k}),
\label{ir2} \\
\hat{u}_{k}  &= B^\dag L_{k+1}(y_{k+1} - C\hat{x}_{k+1|k}), 
\label{ir3}
\end{align}
where $^\dag$ denotes the Moore-Penrose generalized inverse. Here $L_{k+1}$ is the UMV gain obtained by a constrained minimization
of the state error covariance and is given by
\begin{align}
L_{k+1} = B\Pi_{k} + F_{k+1}\tilde{R}_{k+1}^{-1}(I-V\Pi_{k}),
\label{ir4}
\end{align}
where 
\begin{align}
\Pi_{k} & \stackrel{\triangle}{=}  (V^\intercal \tilde{R}_{k+1}^{-1}V)^{-1}V^\intercal \tilde{R}_{k+1}^{-1},
\label{ir5} \\
\hspace{-0.3in}\tilde{R}_{k+1}& \stackrel{\triangle}{=}  CP_{k+1|k}C^\intercal + R, 
\label{ir6}\\
\hspace{-0.3in}P_{k+1|k} & \stackrel{\triangle}{=}  AP_{k|k}A^\intercal + Q, 
\label{ir7} \\
\hspace{-0.3in}P_{k+1|k+1} & \stackrel{\triangle}{=}  P_{k+1|k} - F_{k+1} \tilde{R}_{k+1}^{-1}F_{k+1}^\intercal, 
\label{ir8} \\
\hspace{-0.3in}F_{k+1} & \stackrel{\triangle}{=}  P_{k+1|k}C^\intercal ,
\label{ir9} \\
\hspace{-0.3in} V & \stackrel{\triangle}{=} CB. \label{ir10}
\end{align}
\section{Command following using Input Reconstruction (CIR)}
\label{cirsem}
We next discuss the feedback control scheme that combines the state estimator and the input reconstructor discussed earlier,  for addressing  the command following problem. The proposed scheme  shown in Fig. \ref{com_fol_fig} is referred to as Command following using Input Reconstruction (CIR). 

For generating control inputs, the proposed controller makes use of equations (\ref{ir1}),  (\ref{ir2}) and \begin{equation}\label{cir3}
\hat{u}_{k}  = B^\dag L_{k+1}(y_{{\rm ref},{k+1}} - y_{{\rm pred},k+1}), 
\end{equation}
where $y_{{\rm pred},k+1}=C \hat{x}_{k+1|k}$ is a one-step ahead prediction of the system's output computed by using a one-step open-loop prediction. 
The Kalman filter described by (\ref{k1}) - (\ref{k5}) provides an estimate $\hat{x}_{k|k}$ of the system state $x_k$, using the measured output $y_k$ and control input at previous time instant $\hat{u}_{k-1}$.  The state estimate $\hat{x}_{k|k}$ is used to generate a one-step ahead prediction  $y_{{\rm pred,}{k+1}}=C \hat{x}_{k+1|k}$ of the plant output $y_{k+1}$  which is not available. The already known reference command $y_{{\rm ref},k+1}$,  together with one-step ahead prediction of the plant output  $y_{{\rm pred,}{k+1}}$ and $L_{k+1}$ computed from (\ref{ir4}) are used to determine control input $\hat{u}_k$ at the current time instant using (\ref{cir3}).
Given that measurement and process noises are Gaussian i.i.d. sequences, the Kalman filter provides unbiased, minimum variance estimate of the plant state $\hat{x}_{k+1|k+1}$. The accuracy of  state estimate determines the accuracy of the predicted output $y_{{\rm pred},k+1}$ and in turn the closeness of control input estimates.

\subsection{Analysis of the tracking error}

Since the UMV filter provides unbiased estimates $\hat{u}$ of the desired input ${u_{{\rm ref}}}$, (\ref{ce3}) can be reduced to
\begin{eqnarray}
\mathbb{E}[y_{{\rm ref},k+1}-y_{k+1}]=CA\mathbb{E}[x_{{\rm ref},k} - x_k].\label{ce4}
\end{eqnarray}
We recall from \cite{palanthUMVACC2007} that $L_{k+1}$ given in {\ref{ir2}} satisfies $L_{k+1}CB=B$. A result to show that tracking error in (\ref{ce4}) converges to zero in an expectation sense is now presented.

\begin{proposition}
\label{prop:1}
 Let $\hat{u}_k$ from (\ref{cir3}) and $\hat{x}_{k|k}$ from (\ref{k2}) be such
 that $\mathbb{E}[\hat{u}_k]=\mathbb{E}[u_{{\rm ref},k}]$ and
 $\mathbb{E}[\hat{x}_{k|k}]=\mathbb{E}[x_k]$, respectively. Then,
 \begin{equation}
 \label{propeq}
 \mathbb{E}[y_{{\rm ref},k+1}-y_{k+1}]= 0. 
 \end{equation}
\end{proposition}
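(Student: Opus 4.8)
The plan is to start from the reduced error expression (\ref{ce4}), which already follows from (\ref{ce3}) and the input-unbiasedness hypothesis $\mathbb{E}[\hat u_k]=\mathbb{E}[u_{{\rm ref},k}]$, and to show that the surviving term $CA\,\mathbb{E}[x_{{\rm ref},k}-x_k]$ vanishes. The subtlety to watch is that the state hypothesis $\mathbb{E}[\hat x_{k|k}]=\mathbb{E}[x_k]$ compares the \emph{estimate} with the \emph{actual} state, whereas (\ref{ce4}) involves the gap between the \emph{actual} and the \emph{reference} state; the two hypotheses therefore cannot simply be dropped into (\ref{ce3}) to finish. The key idea is to route through the structure of the control law (\ref{cir3}) so that input-unbiasedness is converted into a statement about this state gap.

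First I would make the control law explicit. Using $y_{{\rm pred},k+1}=CA\hat x_{k|k}$ and expanding $y_{{\rm ref},k+1}=CAx_{{\rm ref},k}+CBu_{{\rm ref},k}$ from the reference dynamics (\ref{ip1})--(\ref{ip2}), substitution into (\ref{cir3}) gives
\[
\hat u_k = B^{\dagger}L_{k+1}CA\,(x_{{\rm ref},k}-\hat x_{k|k}) + B^{\dagger}L_{k+1}CB\,u_{{\rm ref},k}.
\]
Invoking the recalled identity $L_{k+1}CB=B$ together with $B^{\dagger}B=I_p$ (valid because ${\rm rank}\,B=p$), the second term collapses to $u_{{\rm ref},k}$, leaving
\[
\hat u_k - u_{{\rm ref},k} = B^{\dagger}L_{k+1}CA\,(x_{{\rm ref},k}-\hat x_{k|k}).
\]

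Next I would take expectations and use both hypotheses at once: replacing $\mathbb{E}[\hat x_{k|k}]$ by $\mathbb{E}[x_k]$ and setting the left-hand side to zero yields
\[
B^{\dagger}L_{k+1}CA\,\mathbb{E}[x_{{\rm ref},k}-x_k]=0.
\]
The decisive step is then to left-multiply by $CB$. Since $L_{k+1}CB=B$ and $B^{\dagger}B=I_p$ make $B^{\dagger}L_{k+1}$ a left inverse of the square matrix $V=CB$, and since $CB$ is invertible (${\rm rank}(CB)=p=l$), this left inverse must coincide with $V^{-1}$, so $CB\,B^{\dagger}L_{k+1}=I_l$. Hence $CA\,\mathbb{E}[x_{{\rm ref},k}-x_k]=0$, and substituting into (\ref{ce4}) delivers (\ref{propeq}). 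The step I expect to be the main obstacle is exactly this inversion argument: one cannot cancel $B^{\dagger}L_{k+1}$ on the left in general, and the cancellation is legitimate only because squareness ($l=p$) and the rank conditions force $CB$ to be invertible with $B^{\dagger}L_{k+1}$ its two-sided inverse. I would therefore justify $B^{\dagger}B=I_p$ and the two-sided-inverse claim carefully, as these are precisely the points where the standing square/full-rank assumptions enter.
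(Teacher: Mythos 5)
Your proof is correct and follows essentially the same route as the paper: expand the control law (\ref{cir3}) using the reference dynamics, invoke $L_{k+1}CB=B$ and $B^{\dagger}B=I_p$ to isolate $\hat u_k-u_{{\rm ref},k}$, take expectations, and cancel $B^{\dagger}L_{k+1}$ to obtain $CA\,\mathbb{E}[x_{{\rm ref},k}-x_k]=0$. The only cosmetic difference is in that last cancellation: the paper argues that the square matrix $B^{\dagger}L_{k+1}$ has full rank $l=p$ and is therefore invertible, whereas you identify it explicitly as $(CB)^{-1}$ and left-multiply by $CB$ --- the same fact dressed differently.
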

\begin{proof}
 Substituting (\ref{ir1}), (\ref{ir2}), (\ref{ip1}) and (\ref{ip2}) in  equation (\ref{ir3}) yields
\begin{align}
\hat{u}_{k}  &= B^\dag L_{k+1}(CAx_{{\rm ref},k} + CBu_{{\rm ref},k} -
CA\hat{x}_{k|k})\notag \\
&= B^\dag L_{k+1}CBu_{{\rm ref},k} + B^\dag L_{k+1}CA(x_{{\rm ref},k} - \hat{x}_{k|k}).\label{sq2} 
\end{align}
Noting that $L_{k+1}CB=B$ and $B^\dag B = I_p$, (\ref{sq2}) simplifies to
\begin{align}
(\hat{u}_{k} - u_{{\rm ref},k})&= B^\dag L_{k+1}CA(x_{{\rm ref},k} -
\hat{x}_{k|k}).\label{sq3}
\end{align}
Taking expected value on both sides of (\ref{sq3}) and noting that
$\mathbb{E}[\hat{u}_{k} - u_{{\rm ref},k}]=0$, we have
\begin{align}
B^\dag L_{k+1}CA\mathbb{E}[x_{{\rm ref},k} - \hat{x}_{k|k}]&=0.\label{sq4}
\end{align}
Next, since $B ^\dagger L_{k+1}CB = B ^\dagger B=I_p$, it follows that ${\rm rank}(B ^\dagger L_{k+1})=l=p$ and from (\ref{sq4}), we have
\begin{align}
CA\mathbb{E}[x_{{\rm ref},k} - \hat{x}_{k|k}]&=0.\label{sq5} 
\end{align}
Further, since $\mathbb{E}[\hat{x}_{k|k}]=\mathbb{E}[x_k]$, it follows that
\begin{align}
CA\mathbb{E}[x_{{\rm ref},k} - x_k]&=0.\label{sq6} 
\end{align}
Substituting (\ref{sq6}) in (\ref{ce3}) we arrive at (\ref{propeq}).
\end{proof}
The results stated above in Proposition 1 holds true for any choice of input reconstructor and state estimator which ensures $\mathbb{E}[\hat{u}_k]=\mathbb{E}[u_{{\rm ref},k}]$ and  $\mathbb{E}[\hat{x}_{k|k}]=\mathbb{E}[x_k]$, respectively.

CIR is a system inversion based control scheme. It is well known that inversion based control schemes do not guarantee bounded and causal control inputs for systems with non-minimum phase zeros. Further, incorporation of a stabilizing state or output feedback does not alleviate the effects of non-minimum phase zero. The use of CIR scheme therefore does not guarantee the existence of bounded control inputs for command following in case of systems with non-minimum phase zeros. 
\section{Numerical Results} \label{sec6:NumericalResults}

\subsection*{\bf Example 1}
\label{ex1}
Consider a two mass spring damper system given by
\begin{align}
\begin{bmatrix}
\dot{x}_{1} \\
\ddot{x}_{1} \\
\dot{x}_{2} \\
\ddot{x}_{2} \\
\end{bmatrix}
= &
\begin{bmatrix}
0 & 1 & 0 & 0\\
\frac{-(k_1 + k_2)}{m_1} & \frac{-(b_1 + b_2)}{m_1} & \frac{k_2}{m_1} & \frac{b_2}{m_1} \\
0 & 0 & 0 & 1\\
\frac{k_2}{m_2} & \frac{b_2}{m_2} & \frac{-k_2}{m_2} & \frac{-b_2}{m_2} \\
\end{bmatrix}
\begin{bmatrix}
x_1 \\
\dot{x}_{1} \\
x_2 \\
\dot{x}_{2} \\
\end{bmatrix} 
 +
\begin{bmatrix}
0 & 0 \\
\frac{1}{m_2} & 0 \\
0 & 0 \\
0 & \frac{1}{m_2} \\
\end{bmatrix}
\begin{bmatrix}
 u_1\\
 u_2\\
\end{bmatrix},\\
\label{eg1} \\
\begin{bmatrix}
y_1 \\
y_2 \\
\end{bmatrix}
= &
\begin{bmatrix}
0 & 1 & 0 & 0\\
0 & 0 & 0 & 1\\
\end{bmatrix}
\begin{bmatrix}
x_1 \\
\dot{x}_{1} \\
x_2 \\
\dot{x}_{2} \\
\end{bmatrix},
\label{eg2}
\end{align}\\
where $m_1=m_2=1$, $k_1 = 4$, $k_2 = 8$, $b_1 = 2,$ and $b_2 = 4$. 
The simulation result of command following performance is shown in Fig. \ref{run} for the proposed algorithm when a sawtooth  and a sinusoidal reference commands are issued to $y_1$ and $y_2$ respectively. The same simulation was run for 100 times and  the mean of the tracking results was observed as shown in Fig. \ref{avg_run}. 
Comparing Fig. \ref{run} with Fig. \ref{avg_run}, it is clear that the command following is unbiased and Proposition \ref{prop:1} is verified.

Tracking performance obtained with proposed CIR scheme is further compared with LQG and MPC controllers tuned at nominal values. The plot of comparison is shown in Fig. \ref{cmstepci}. Table \ref{table:comp1} shows the comparison of mean squared errors for each control scheme.

\subsection*{\bf Example 2}

Next, we consider a MIMO second order RC circuit with two input voltages and two output voltages  as shown in Fig. \ref{fig:linear_MIMO_RC}. The  state space description for this system in continuous time can be written as
\begin{equation}\label{eq:linear_mimo_states1}
\begin{split}
 \begin{bmatrix}
    \frac{{d V _{C_1} (t)}}{dt}\\
    \frac{{d V _{C_2} (t)}}{dt}
  \end{bmatrix}
  &~=~\begin{bmatrix}
    \frac{-(R_1 + R_3)}{C_1 R_1 R_3} & \frac{1}{C_1R_3} \\
    \frac{1}{C_2 R_3} &\frac{-(R_2+R_3)}{C_2R_2R_3}
  \end{bmatrix}
                        \begin{bmatrix}
                          V_{C_1} (t)\\
                          V_{C_2} (t)
                        \end{bmatrix}+ \\
  &\begin{bmatrix}
    \frac{1}{C_1R_1} & 0\\
    0 & \frac{1}{C_2R_2}
  \end{bmatrix}
  \begin{bmatrix}V_{{\rm in}_1}(t)\\V_{{\rm in}_2}(t)\end{bmatrix},
  \end{split}
\end{equation}
\begin{align}
  \begin{bmatrix}V_{{\rm out}_1}(t)\\V_{{\rm out}_2}(t)\end{bmatrix}&~=~\begin{bmatrix}
    1 & 0\\0 & 1
  \end{bmatrix}\begin{bmatrix}
V_{C_1}(t)\\V_{C_2}(t)
\end{bmatrix},\label{eq:linear_mimo_outputs2}
\end{align}
where $V_{C_1}$ and $V_{C_2}$ are the voltages across capacitors $C_1$ and $C_2$ respectively and are also the states of the system. $V_{{\rm in}_1}(t)$, $V_{{\rm in}_2}(t)$ are the input voltages and $R_1$, $R_2$ are the resistances. The objective here is to track reference
commands specified for the output voltages $V_{{\rm out}_1}(t) =V_{C_1} (t)$ and $V_{{\rm out}_2}(t) =V_{C_2} (t)$. To achieve this objective the CIR scheme is implemented in real time as shown in the Fig. \ref{fig:physical_setup}. An Arduino-Uno board is used for communication between the RC circuit and MATLAB-Simulink environment where the code for CIR is executed.  For digital implementation, the continuous time state space model is discretized at a time step of $0.1 s$ for use with CIR scheme. 

The values of the resistances and capacitors used are, $R_1=1 \times 10^{3} ~\Omega$, $R_2=1 \times 10^{3}~ \Omega$, $R_3=1\times10^{3}$, $C_1=1 \times
10^{-6}~\rm F$ and $C_2=330 \times 10^{-6}~ \rm F$. 
Fig. \ref{fig:rc_MIMO} shows the real-time tracking performance of
CIR for different reference commands issued for $V_{C_1}$ and $V_{C_2}$.

\section{Remarks on assumptions} \label{sec8:remarks}
The assumption that the system is square enabled us to show in Proposition \ref{prop:1} that the expected tracking error converges to zero if the estimates of the states and inputs are unbiased. The suggested CIR scheme can be used for following commands in an expectation sense with the Kalman and UMV filters being valid choices for unbiased state estimator and unbiased input reconstructor. In case of non-square systems however, it is not guaranteed that the expected tracking error will converge to zero when CIR scheme is used. In what follows, we discuss how CIR scheme can be used in case of non-square systems to follow reference commands in an expectation sense under some circumstances. 
We discuss the use of CIR scheme for systems with $l<p$ first. 

Given a system 
\begin{eqnarray}
x_{k+1} &=& Ax_k + B{u}_k, \label{mp1}\\
y_{k} &=& Cx_{k}, \label{mp2}
\end{eqnarray} 
with $l<p$, let $N \in \mathbb{R}^{n \times l}$
 be such that $N$ modifies (\ref{mp1}), (\ref{mp2}) as \begin{eqnarray}
x_{k+1} &=& Ax_k + \tilde{B}\tilde{u}_k, \label{mp1_}\\
y_{k} &=& Cx_{k}. \label{mp2_}
\end{eqnarray}
%
where $\tilde{B}=BN \in \mathbb{R}^{n \times l}$ and $\tilde{u}_k =N ^{\dagger} u_k$. 
Suppose $N$ is chosen such that ${\rm rank}(N)=l$ and columns of $N$ belong to row space of $CB$. Then, if (\ref{mp1}) and (\ref{mp2}) is trackable, then (\ref{mp1_}) and (\ref{mp2_}) is trackable as well. Thus, CIR scheme when used on the modified system generates inputs that can be used for following reference commands on the original system when multiplied by $N$ matrix.
Any right inverse of $CB$ such that ${\rm rank} ((CB)^R)=l$, qualifies to be a valid $N$ matrix. 
One such convenient choice is $(CB)^{\dagger}$.

Next, to discuss the use of CIR in the case of systems with $l>p$, we recall a few observations from \cite{kadam2017revisiting}. 
A batch equation for system described by equations (\ref{mp1}) and (\ref{mp2}) for $r \in \mathbb{Z}^+$ samples can be written as

\begin{equation}
\mathcal{Y}_{{r}}= \Gamma_r x_0 + M_r {\mathcal{U}}_{r-1}
\end{equation}

where $\mathcal{Y}_r \triangleq \begin{bmatrix}
y_1 \\ y_2 \\ \vdots \\ y_r \end{bmatrix}$ and ${\mathcal{U}}_{r-1} \triangleq \begin{bmatrix}
{u}_0 \\ {u}_2 \\ \vdots \\ {u}_r \end{bmatrix}$. Also, the matrices $\Gamma_r \in \mathbb{R}^{rl \times n}$ and $M_r \in \mathbb{R}^{rl \times rp}$ are defined as
\begin{equation}\label{GrMr}
\Gamma_r \triangleq \begin{bmatrix}
CA\\
CA^2\\
\vdots \\
CA^{r}
\end{bmatrix}~\text{and}~M_r \triangleq \begin{bmatrix}
CB &0  &\cdots &0 \\
CAB &CB &\cdots &0 \\
\vdots &\vdots  &\ddots &0 \\
CA^{r-1}B &CA^{r-2}B &\cdots &CB\\
\end{bmatrix}.
\end{equation}
In the $l>p$ case it has been established that there exist $\mathcal{Y}_{{\rm ref},r} \notin \mathcal{R}(M_r)$ that cannot be  tracked exactly. In this case however, it is possible to track the sequence $\mathcal{Y}_{{\rm ref},{r}} \Pi_{\mathcal{R}({M_r})} = M_r(M_r^{\intercal} M_r)^{\dagger}M_r^{\intercal} \mathcal{Y}_{{\rm ref,}r}$ which is the orthogonal projection of $\mathcal{Y}_{{\rm ref},{r}} $ on $\mathcal{R}(M_r)$ and therefore is the sequence closest  to $\mathcal{Y}_{{\rm ref},r}$ among all the sequences present in $\mathcal{R}(M_r)$. Thus, CIR scheme can be used to track the modified reference command. This can be seen from the following example.

Consider a system with $A=\begin{bmatrix}
    0.1 &  -0.7 &         0        & 0 \\
    0.7    & 0.2  & -0.7    &0 \\
         0    & 0.7    & 0.3   & -0.7 \\
         0         &0   & 0.7   & 0.4
\end{bmatrix}, B=\begin{bmatrix}
0 \\ 1 \\0 \\0
\end{bmatrix}$ and $ C=\begin{bmatrix}
     0     &1     &0     &0\\
     1     &0     &0     &0
\end{bmatrix}.$ We consider tracking performance under the influence of process noise ($w_k$) and measurement noise ($v_k$) with variances 0.01. The eigenvalues of matrix $A$ are $\lbrace 0.25 \pm 1.211 i,~ 0.25 \pm 0.4338 i\rbrace$. Also, the system has minimum phase zeros at $\lbrace    0.1, 0.35 \pm 0.6982 i \rbrace$. The system can track the projections of the reference command on $\mathcal{R}(M_r)$ in an expectation sense, when CIR is used  and the projected sequence is given as a reference, see Fig. \ref{oa1}.

Alternatively, it is also worthwhile  to note that, if $p-l$ measurements are ignored to make the system a square system such that ${\rm rank}(CB)=l$, then the remaining outputs can be tracked in an expectations sense. 
Removing the second row of $C$ to ignore the second component of the output vector, we have $C=\begin{bmatrix}
0 & 1 & 0 &0
\end{bmatrix}$ and therefore $CB=1$. The tracking performance can be seen in Fig. \ref{oa2}.
\newpage
\section{Conclusion} \label{sec10:conc}

A feedback control scheme for command following in input and state observable square MIMO systems was discussed in this paper. This proposed scheme is based on input reconstruction methods and is akin to left inversion with feedback. The command following problem is reduced to a two part exercise of state estimation and input reconstruction. It was shown that tracking of reference commands is unbiased, if both the state estimator an the input reconstructor are chosen to be unbiased. Simulations showing unbiasedness property were presented along with a real-time implementation using a low-cost hardware. Kalman filter and Unbiased Minimum Variance filter were used for state estimation and  input reconstruction respectively.  Use of the proposed scheme under certain conditions for non-square systems  was also discussed. For systems with more inputs than outputs ($l<p$), the use of $N$ matrix that modified the system to a square one and allowed the use of proposed controller was suggested. In case of systems with more number of outputs than inputs ($l>p$), use of projections and disregarding output measurements to make system square was suggested. 

%
\newpage
\begin{table}
 \caption{Comparison of mean squared errors (MSE) for Example 1} 
 \centering 
 \begin{tabular}{c c c} 
 \hline\hline 
 Filter & \multicolumn{2}{c}{MSE}\\ 
 {} & Output 1 & Output 2\\
 \hline 
 UMV & 0.3397 & 0.3392 \\
 \hline 
LQG & 21.911 & 2.4104 \\
 \hline 
 DMPC & 12.3520 & 4.9201\\
 \hline
 \end{tabular}
 \label{table:comp1} 
 \end{table}

\begin{figure}
\centering
\includegraphics[width=0.7\linewidth]{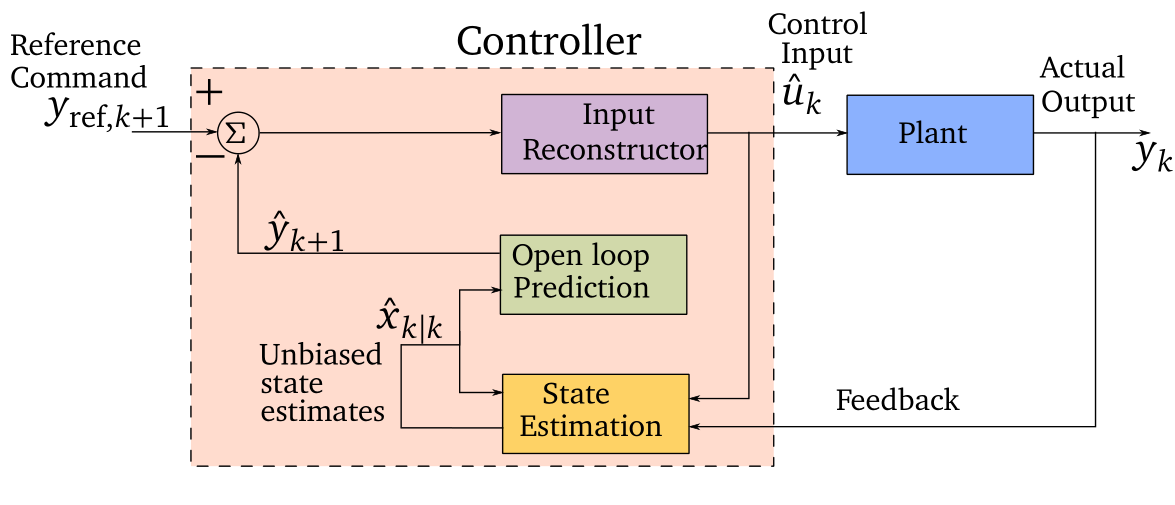}
\caption{Schematic of command following controller.}
\label{com_fol_fig}
\end{figure}
\begin{figure}
\centering

\includegraphics[width=0.55\linewidth]{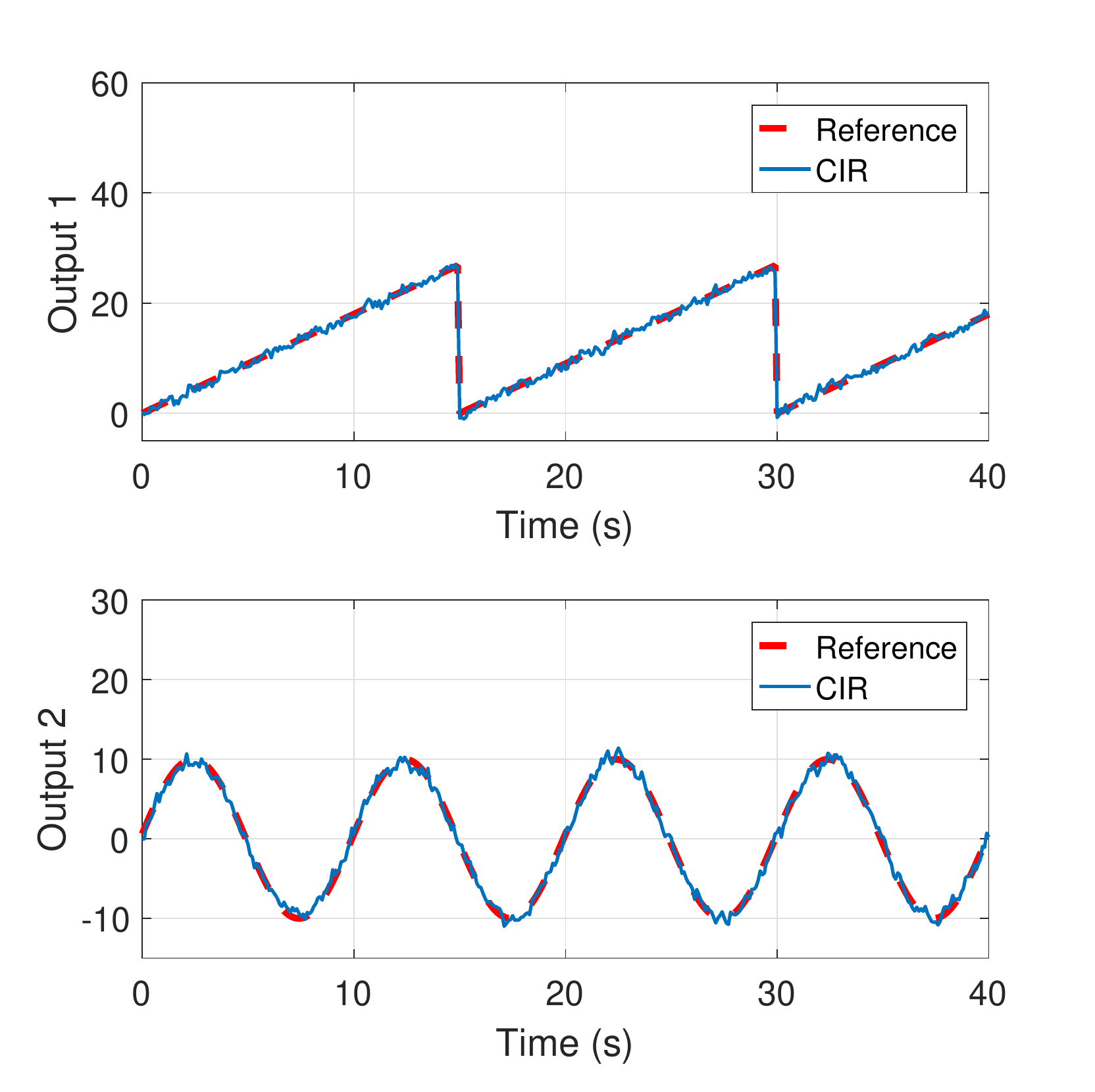}
\caption{Command following for a set of references}
\label{run}
\end{figure}

\begin{figure}
\centering
\includegraphics[width=0.55\linewidth]{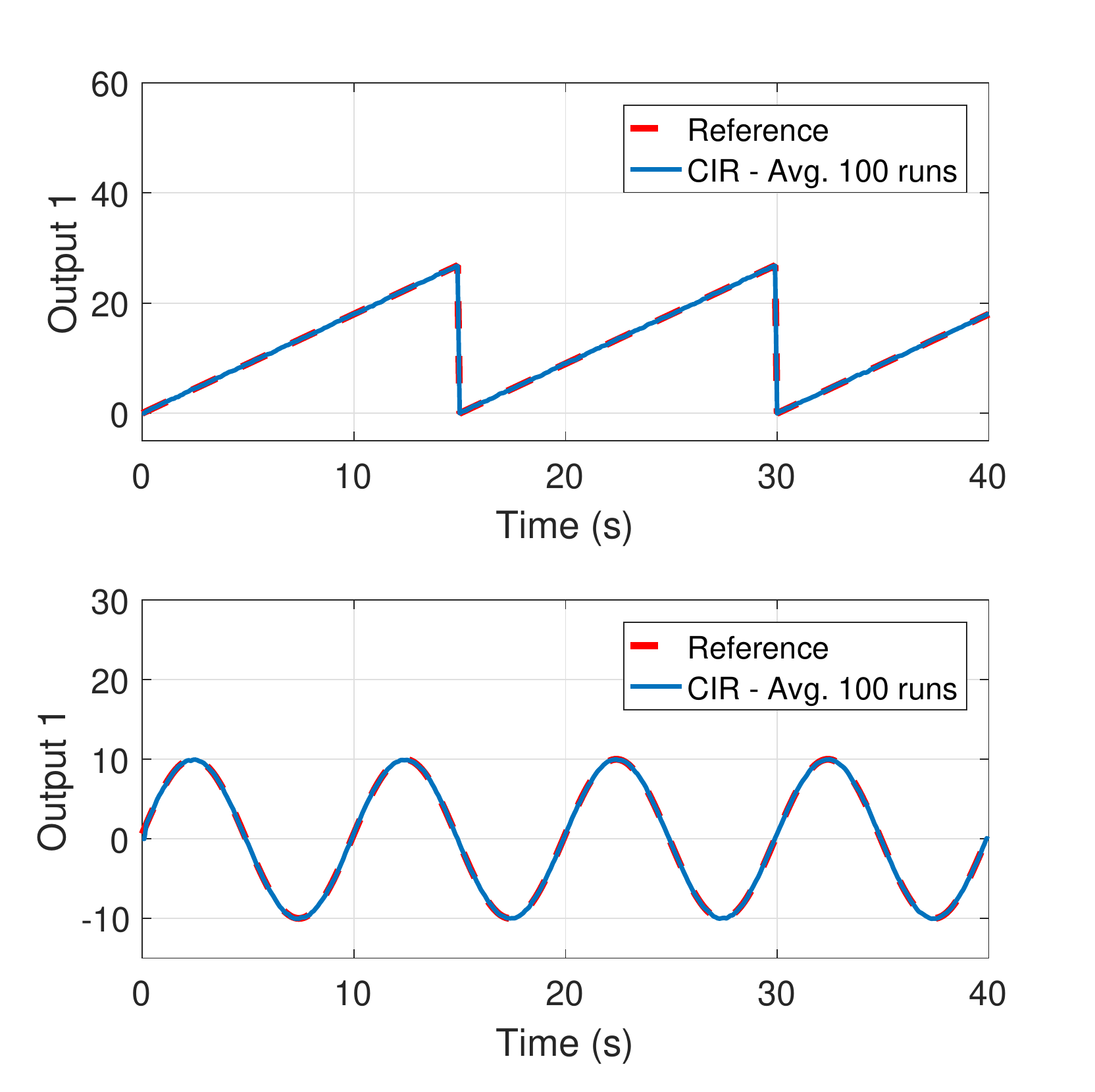}
\caption{Command following for a set of reference commands averaged over 100
runs}
\label{avg_run}
\end{figure}

\begin{figure}
 \centering
 \includegraphics[width=0.55\linewidth]{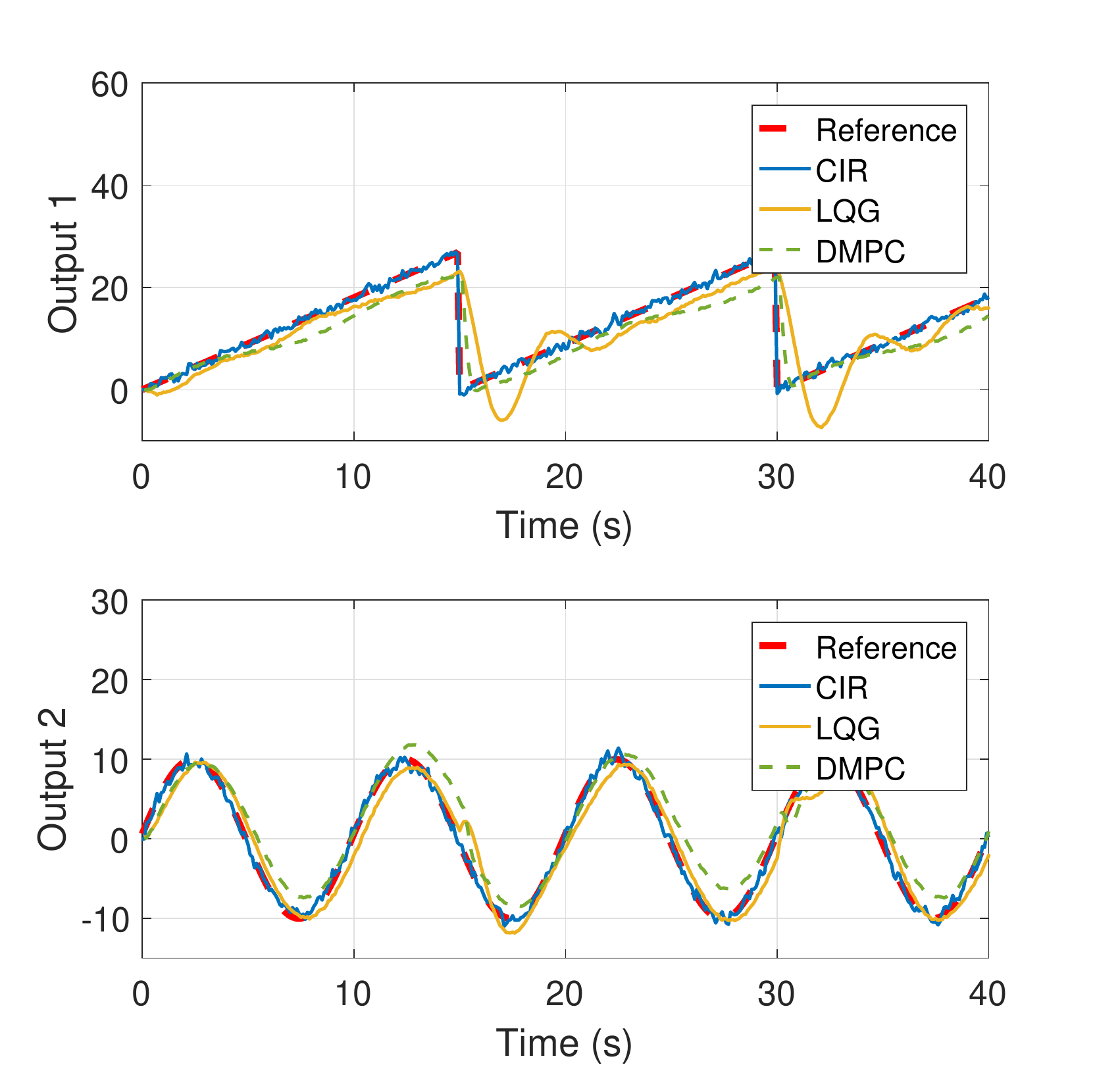}
 \caption{Comparison with LQG and DMPC}
 \label{cmstepci}
 \end{figure}
 
\begin{figure}[!ht]
\centering
\includegraphics[width=0.7\linewidth]{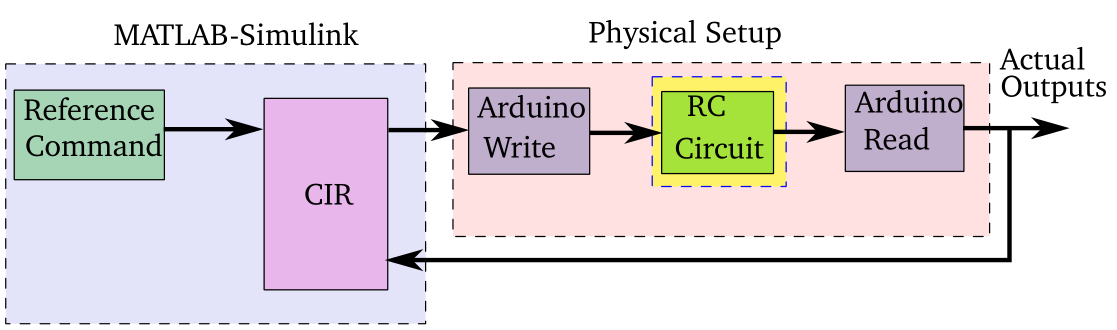}
\caption{Setup for real time command following using CIR}
\label{fig:physical_setup}
\end{figure}

%
%

\begin{figure}
\centering
\includegraphics[width=0.7\linewidth]{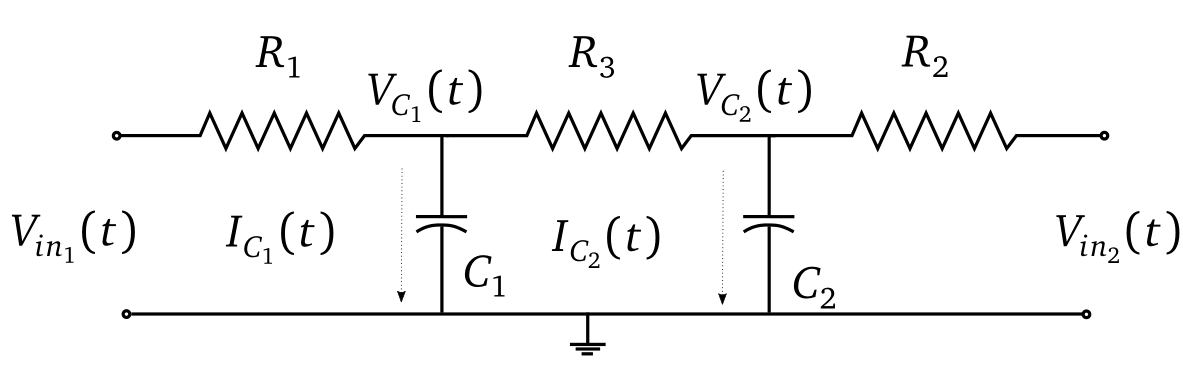}
\caption{A second order MIMO RC circuit}
\label{fig:linear_MIMO_RC}
\end{figure}

\begin{figure}
\centering
\includegraphics[width=0.8\linewidth]{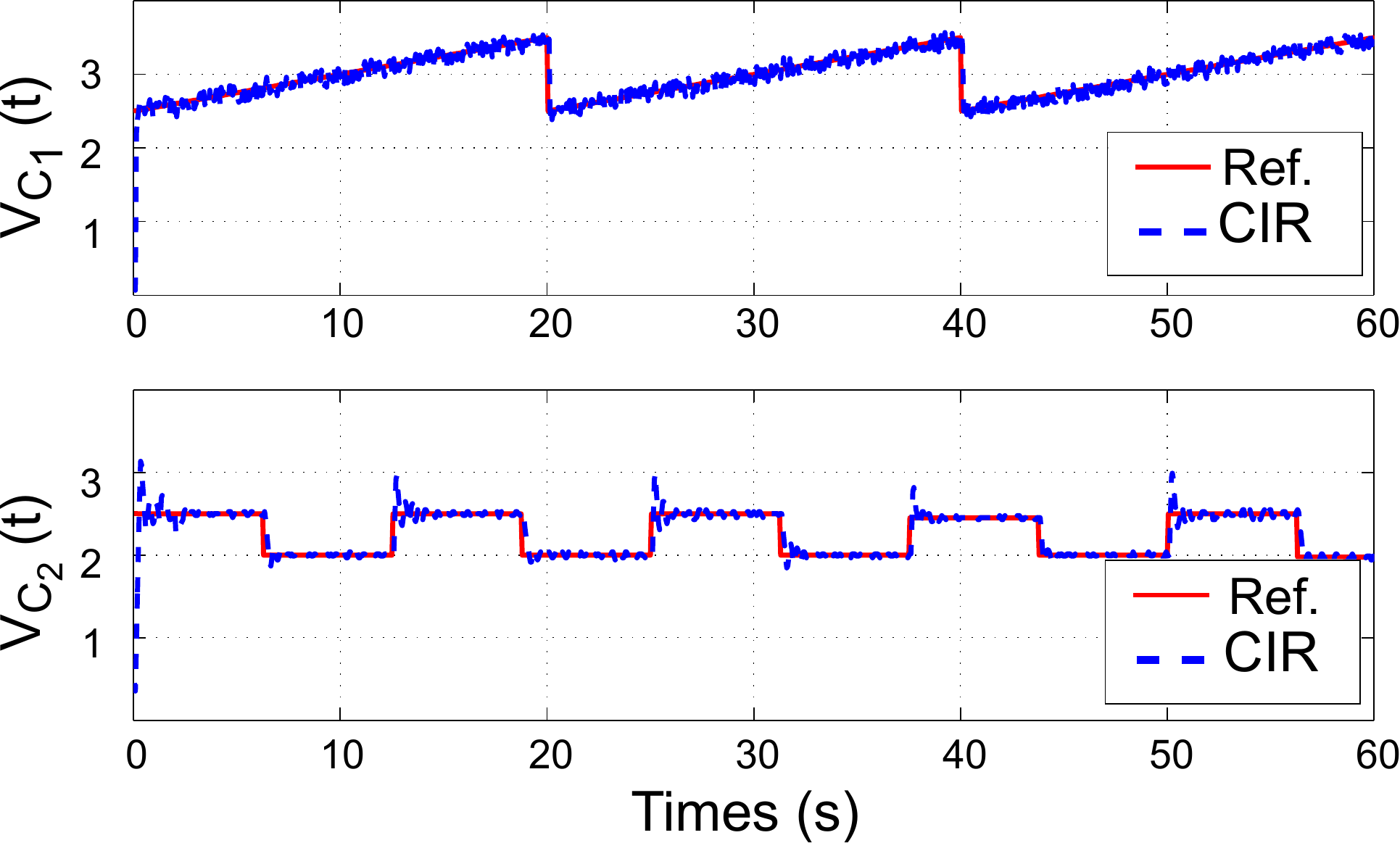}
\caption{Command following responses for the MIMO RC circuit in Example 3}
\label{fig:rc_MIMO}
\end{figure}


%
%

\begin{figure}
\centering
\includegraphics[width=0.75\linewidth]{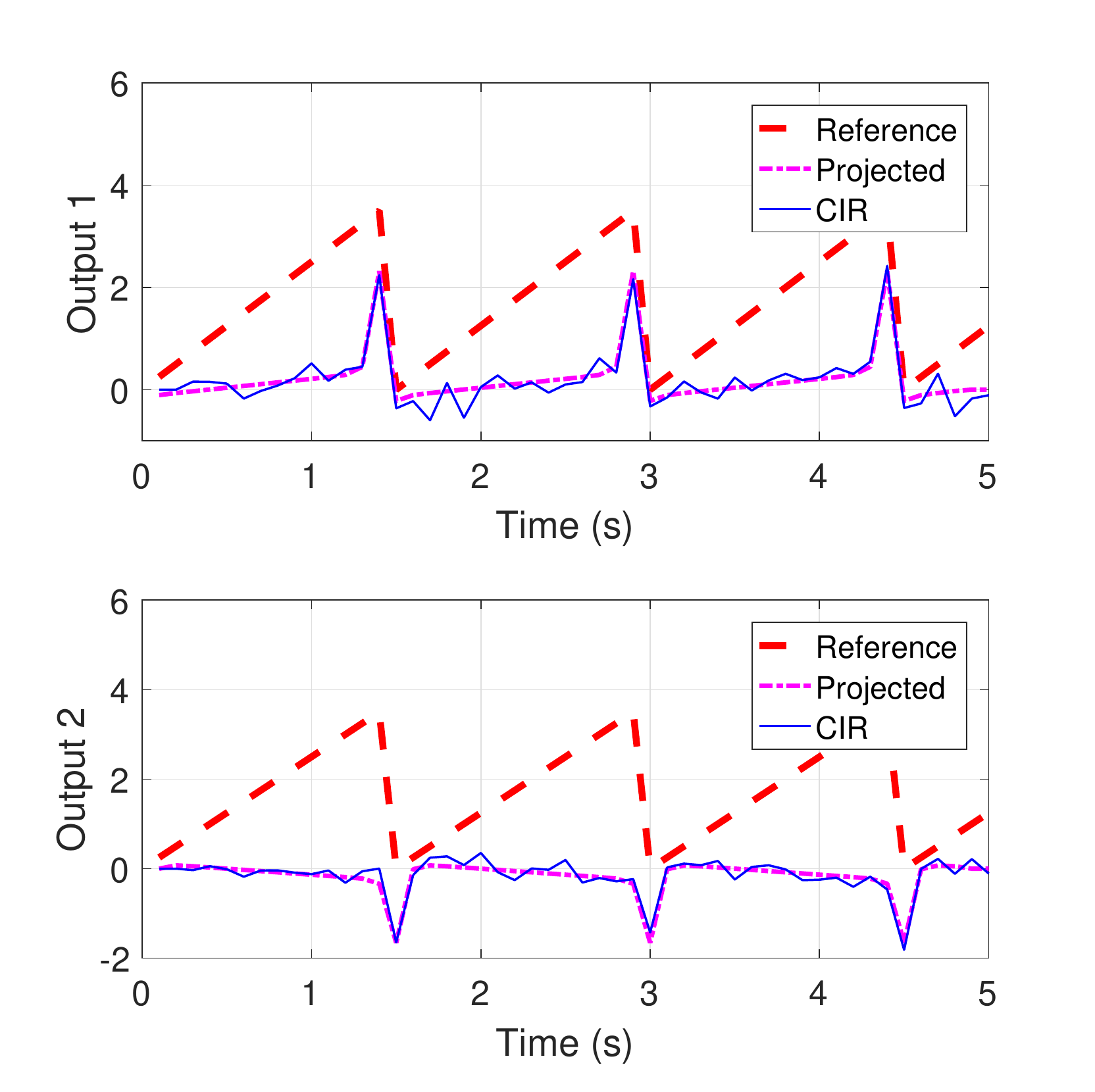}
\caption{System outputs can track the projections of the reference command on $\mathcal{R}(M_r)$}
\label{oa1}
\end{figure}

\begin{figure}
\centering
\includegraphics[width=0.75\linewidth]{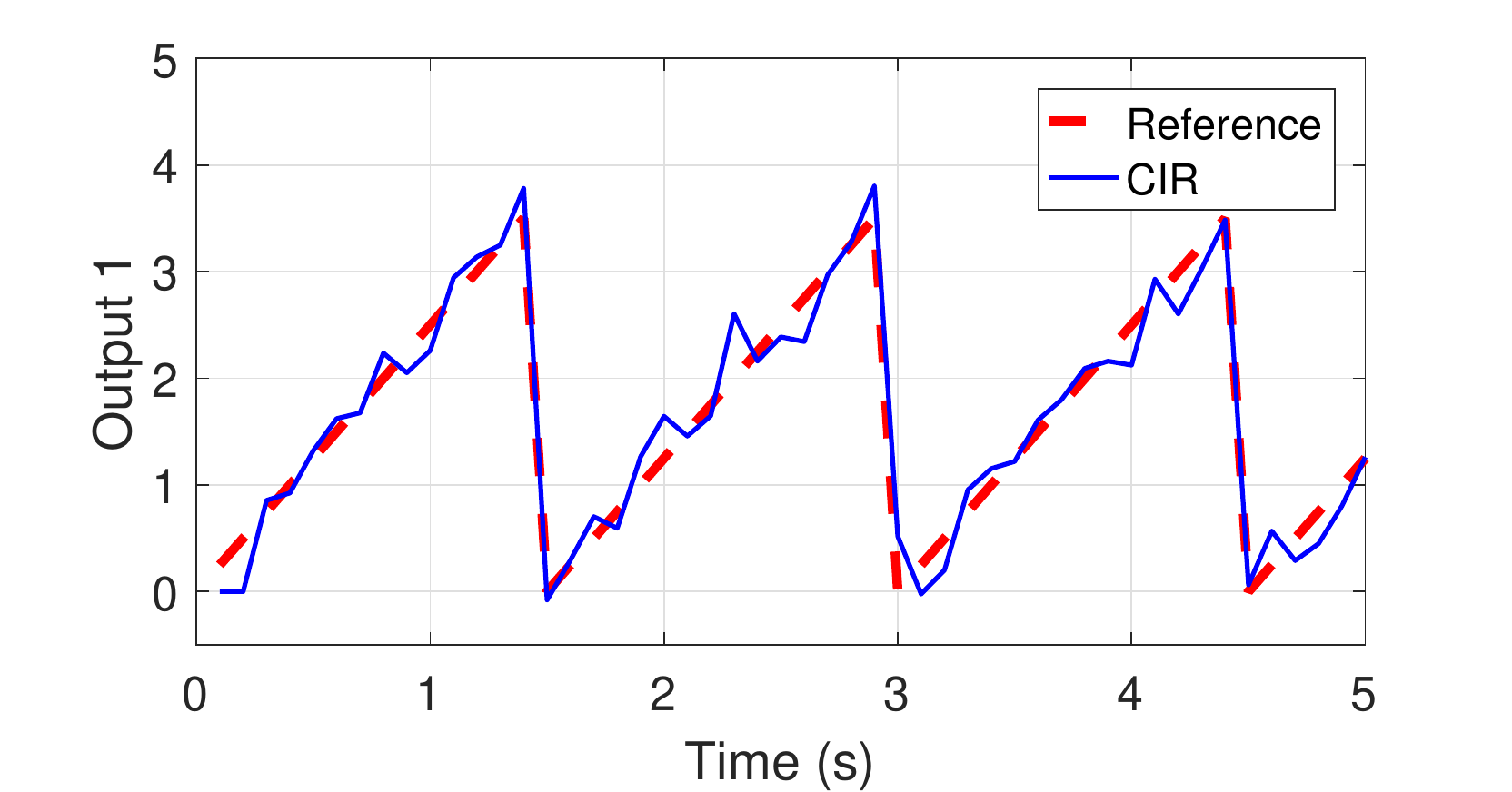}
\caption{Commands can be accurately followed for a system with $l>p$ if $p-l$ outputs are ignored keeping ${\rm rank}(CB)=l$.}
\label{oa2}
\end{figure}


%

%
\end{document}